\newcommand{\mocd}{\textsc{Minimum Outer-connected Domination}}
\newcommand{\ocdd}{\textsc{Outer-connected Domination Decision}}
\newcommand{\ds}{\textsc{Dominating Set}}
\begin{document}
\thispagestyle{plain}
\vspace*{-2cm}
\begin{center}
\includegraphics[height=2.2cm, width=18cm]{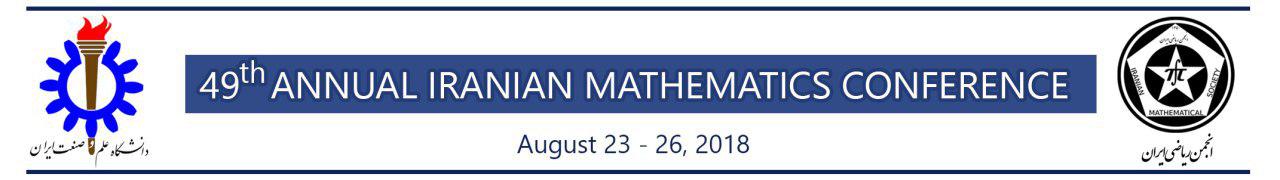}
\end{center}
\vspace*{0.5cm}

\begin{center}
{\Large  \bf W[1]-hardness of Outer Connected Dominating set in d-degenerate Graphs}
\vspace*{0.5cm}

{\small Mohsen Alambardar Meybodi}\index{Alambardar Meybodi, Mohsen}\footnote{speaker}\\
{\small Department of Computer Science, Yazd University, Yazd, Iran}\\[2mm]
{\small  Mohammad Reza Hooshmandasl}\index{Hooshmandasl, Mohammad Reza}\\
{\small Department of Computer Science, Yazd University, Yazd, Iran}\\[2mm]
{\small  Ali Shakiba}\index{Shakiba, Ali}\\
{\small Department of Computer Science, Vali-e-Asr University of Rafsanjan, Rafsanjan, Iran}\\[2mm]
\end{center}

\vspace*{0.5cm}

\hspace{-1cm}\rule{\textwidth}{0.2mm}
\begin{abstract}
A set $D\subseteq V$ of a graph $G=(V,E)$ is called an outer-connected dominating set of $G$ if every vertex $v$ not in $D$ is adjacent to at least one vertex in $D$, and the induced subgraph of $G$ on $V\setminus D$ is connected. The \mocd\ problem is to find an outer-connected dominating set of minimum cardinality for the input graph $G$. Given a positive integer $k$ and a graph $G=(V,E)$, the \ocdd\ problem is to decide whether $G$ has an outer-connected dominating set of cardinality at most $k$. The \ocdd\ problem is known to be NP-complete, even for bipartite graphs.
We study the problem of outer-connected domination on sparse graphs from the perspective of parameterized complexity and show that it is W[1]-hard on d-degenerate graphs, while the original connected dominating set has FTP algorithm on d-degenerate graphs.

\end{abstract}
\keywords{outer-connected dominating set; parameterized complexity; sparse graphs}
\subject{05C85, 68R10, 05C69, 03D15}

\hspace{-1cm}\rule{\textwidth}{0.2mm}
\section{Introduction}
Domination problems are classical problems in computer science and are considered with many variations and for different classes of graphs. A set $D \subseteq V$ is a dominating set if every vertex not in $D$ is adjacent to (is dominated by) at least one vertex in $D$. Most of the varieties of the dominating sets are shown to be NP-complete, even for simple classes of graphs such as bipartite graphs. A considerable part of the algorithmic study on this NP-complete problem has been focused on the design of parameterized algorithms. Formally, a parameterization of a problem is assigning an integer $k$ to each input instance as a parameter. A parameterized problem is fixed-parameter tractable (FPT) if there exists an algorithm which solves the problem in time $f(k).\vert \ell \vert^{O(1)}$, where $\vert \ell \vert $ is the size of the input and $f$ is an arbitrary computable function depending only on the parameter $k$. Just as NP-hardness is used as evidence that a problem is not probably polynomial time solvable, there exists a hierarchy of complexity classes above FPT, and showing that a parameterized problem is hard for one of these classes is considered evidence that the problem is unlikely to be fixed-parameter tractable. The main classes in this hierarchy are:
$$FPT \subseteq W[1] \subseteq W[2] \subseteq  \dots  \subseteq W[P] \subseteq XP.$$

In general, finding a dominating set of size k is a canonical W[2]-complete problem, and therefore does not admit FPT algorithms unless an unexpected collapse occurs in the W-hierarchy \cite{downey1995fixed}. Nevertheless, there are interesting classes of graphs for which the dominating set problem admits FPT algorithms, namely sparse graphs. For example, there are FPT algorithms for nowhere dense graphs \cite{dawar2009domination} and d-degenerate graphs \cite{alon2009linear}. Also, an FPT algorithm was reported in \cite{philip2009polynomial} for t-biclique-free graphs, i.e., graphs that do not contain $K_{t,t}$ as a subgraph. To the best of our knowledge, this is the largest class of graphs for which the \ds problem is known to be fixed-parameter tractable, d-degenerate and nowhere dense graphs are subclasses of $t$-biclique-free graphs. Figure \ref{sparse-graphs-hierarchy} shows the relationship between some well-known classes of sparse graphs.

The outer-connected domination problem in graphs was introduced by Cyman in \cite{cyman2007outer1connected} and is shown it is NP-complete for bipartite graphs \cite{cyman2007outer1connected}. A dominating set $S\subseteq V$ is an outer-connected dominating set, abbreviated as OCD-set if the subgraph induced by $V \setminus S$ is connected. The outer-connected domination number, denoted by $\gamma^c(G)$, is the cardinality of a minimum outer connected dominating set.  It is subsequently studied in different classes of graphs, for instance, it is shown to be NP-complete for doubly chordal graphs \cite{keil2013computing}, and undirected path graphs \cite{keil2013computing}. Also, some algorithmic and approximation results are presented in \cite{panda2014algorithm}.
\begin{figure}
	\centering
	\includegraphics[scale=.7]{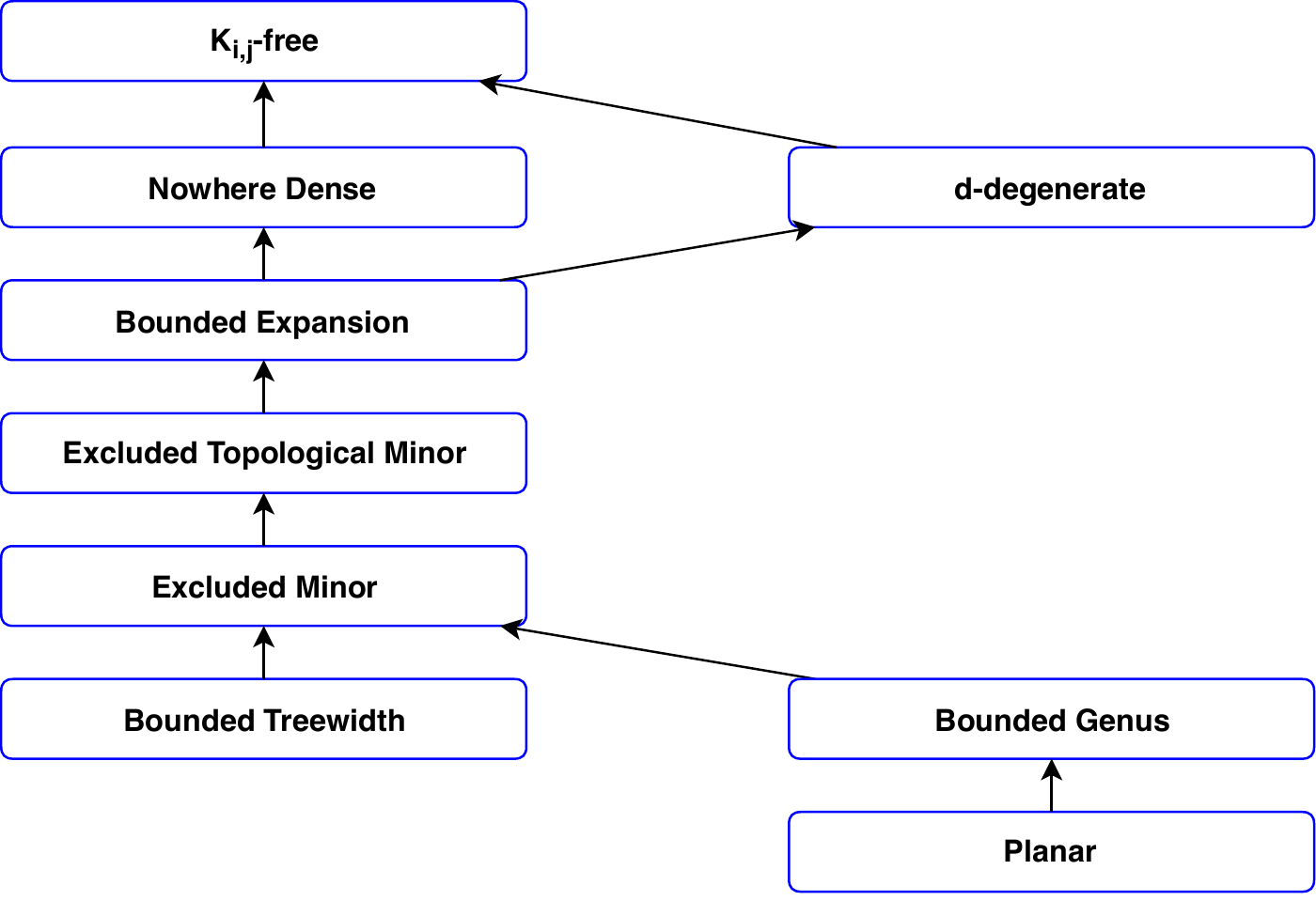}\label{sparse-graphs-hierarchy}
	\caption{Inclusion relations among some well-studied classes of sparse graphs \cite{telle2012fpt}.} 
\end{figure}

In this paper, we explore the W-hardness of the \ocdd problem on the d-degenerate graphs. More precisely, we show there is no fixed-parameter tractable (FPT) algorithm for the outer connected dominating set on degenerate graphs, while Alon and Gutner \cite{alon2009linear} proved that the problems of dominating set and connected dominating set become FPT when the input graph is d-degenerate.

\section{Degenerate Graphs}\label{d-degenerate graph}

A graph G is $d$-degenerate if every subgraph of $G$ contains a vertex of degree at most $d$. Equivalently, a graph $G$ is $d$-degenerate if and only if there exists an elimination ordering on its vertices such that every vertex has at most $d$ neighbors appearing later in the ordering. The following parameterized problem, proved to be W[1]-hard in \cite{fellows2009parameterized}, is used in our proof. In the {\sc Multicolored Independent Set} problem, we are given a graph $G$ and a coloring of $V$ with $k$ colors. The problem is parameterized by considering $k$, which equals to the number of colors, as the parameter of the input instance, and the goal is to find a $k$-sized independent set in $G$ containing exactly one vertex of each color. We will reduce the {\sc Multicolored Independent Set} problem (with parameter $k$) to the problem of finding an outer connected dominating set of size at most $2k+2$ in a graph of degeneracy at most $3$. 

\subparagraph{The reduction.} 
Let $k$ be an integer and $G$ be a k-colored graph where its vertices are partitioned into $k$ groups $V_1,V_2,\dots ,V_k$ and each group corresponds to an independent set of the same color, because it is a valid coloring. For every edge $e=\{u,v\}\in E(G)$, we replace it by a 2-path $u,v_e,v$, where $v_e$ is a new vertex corresponding to the edge $e$. The set $S_E$ is the collection of all vertices $v_e$ which are added to the graph $G'$. We connect all the vertices in the set $S_E$, that is the newly added vertices,  to a new vertex $r$. For $1\leq i \leq k$, we add new vertices $x_1^i,x_2^i$ and $u_i$ to each group $V_i$  and connect them to all the old vertices in $V_i$. Also, we build a path of length $4$ for each group and connect all the vertices in the path to the vertex $u_i$ and connect $x_1^i$ and $x_2^i$ to the first vertices in the path. We also add a new vertex $r'$ and add edge between the last vertex of each path of length $4$ and $r'$. Then, we connect the vertex $r'$ to a new pendent vertex $r''$. This concludes the construction of $G'$. Figure \ref{d-degenerate} shows the constructed graph $G'$ in general.

\begin{figure}
\centering
 \includegraphics[scale=1]{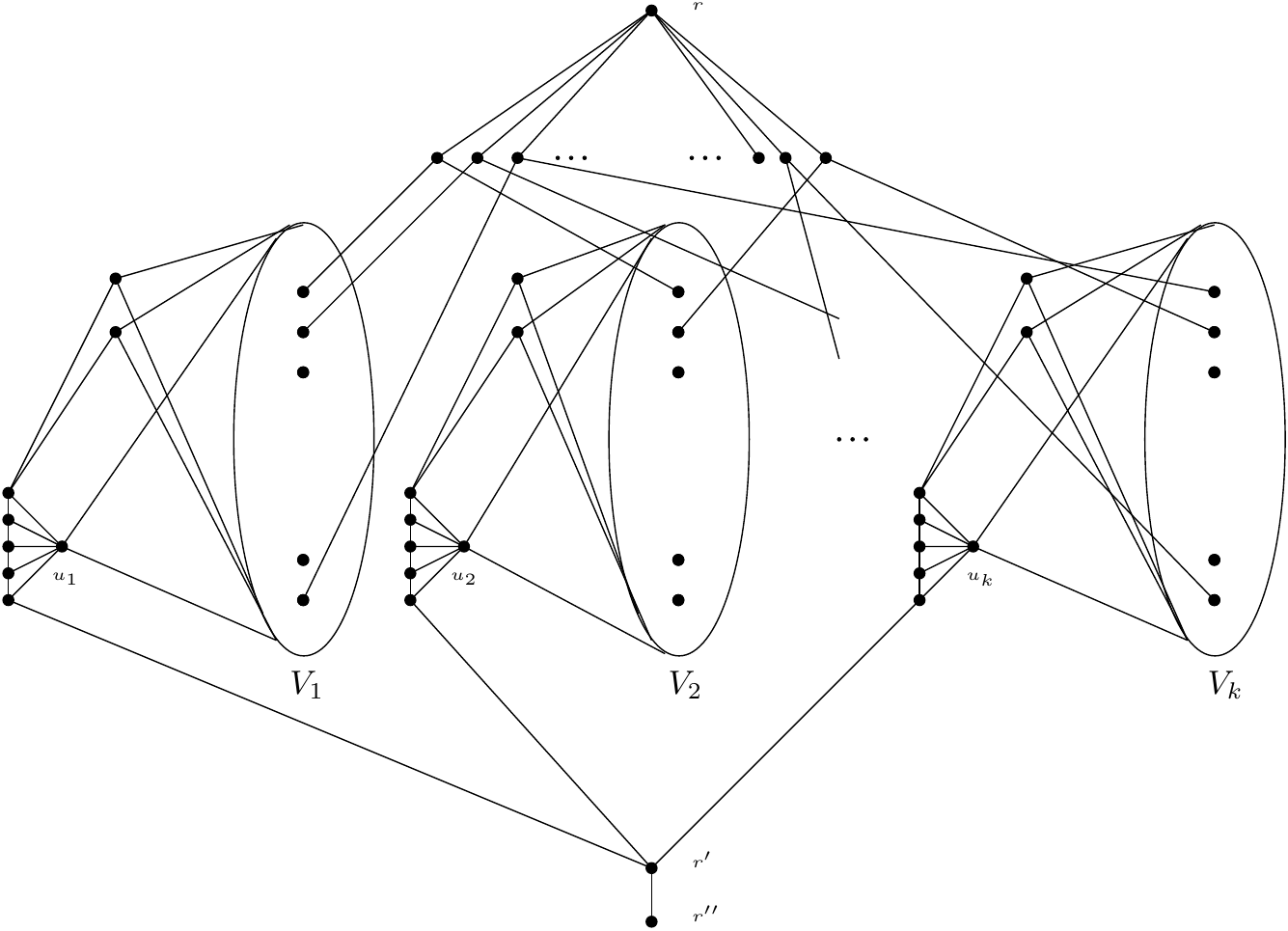}\label{d-degenerate}
 \caption{Reducing the {\sc Multicolored Independent Set} problem to the {\sc Outer-connected Domination} problem.} 
\end{figure}

\begin{lemma} 
The constructed graph $G'$ has degeneracy of at most $3$.
\end{lemma}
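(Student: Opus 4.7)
The plan is to use the equivalent characterization of $d$-degeneracy recalled at the start of Section 2: it suffices to exhibit an elimination ordering of $V(G')$ in which every vertex has at most $3$ neighbors appearing later. Equivalently, I will repeatedly identify a vertex of degree at most $3$ in the current graph, delete it, and continue until no vertices remain. The only vertices of $G'$ with potentially large degree are $r$ (adjacent to all of $S_E$), $r'$ (adjacent to the last path vertex of every group), and, inside each group $V_i$, the three vertices $x_1^i,x_2^i,u_i$ (adjacent to every old vertex of $V_i$). All other vertices -- the pendant $r''$, the edge vertices $v_e$, the original vertices $v\in V_i$, and the four path vertices of each group -- have constant degree bounded by $4$ regardless of the size of the original instance $G$.

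The ordering I would use proceeds in waves. First, peel off $r''$ (degree $1$). Next, delete every edge vertex $v_e\in S_E$: each is adjacent only to its two endpoints in $G$ and to $r$, giving degree exactly $3$. After this wave the vertex $r$ is isolated and can be removed. Now every original vertex $v\in V_i$ has its neighborhood reduced to $\{x_1^i,x_2^i,u_i\}$, so it has degree $3$; delete all of them, group by group. At that point the path of each group is attached only to $u_i$, to $x_1^i,x_2^i$ at one end, and to $r'$ at the other end. Within a single group $i$, I would then remove $x_1^i$ and $x_2^i$ (both of degree $1$ now), then strip the path endpoint adjacent to them (degree $2$), then the interior path vertices one by one (each of degree at most $3$ when removed, in fact at most $2$ because $u_i$ is the only remaining non-path neighbor), then $u_i$ itself (degree $1$ once only the path endpoint attached to $r'$ survives), then that last path vertex (degree $1$). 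Doing this for every $i$ leaves $r'$ isolated, and it is removed last.

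The verification step is routine but needs to be written down carefully: for each class of vertex I should confirm that at the moment it is eliminated its remaining degree is indeed at most $3$. The only nontrivial bounds are for the edge vertices $v_e$ (degree exactly $3$ at the start) and for the original vertices $v\in V_i$ (degree $3$ only after all $v_e$ containing $v$ have been removed, which is why edge vertices must be processed first). I do not foresee a genuine obstacle here; the construction was clearly designed so that every high-degree hub ($r$, $r'$, $u_i$, $x_1^i$, $x_2^i$) is the \emph{last} vertex of its local neighborhood to survive, which is exactly the condition needed to make a degeneracy ordering exist. Once the ordering above is written out and the at-most-$3$ bound is checked in each wave, the conclusion that $G'$ is $3$-degenerate follows immediately.
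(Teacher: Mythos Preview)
Your proposal is correct and follows essentially the same approach as the paper: exhibit a degeneracy ordering by repeatedly deleting vertices of degree at most $3$, first the pendant $r''$, then the edge vertices $S_E$, and so on until only the hubs remain. The paper's own proof is terser and strips the length-$4$ paths together with $S_E$ before turning to the original vertices of each $V_i$, whereas you do it afterwards, but this is an inessential reordering of the same elimination argument.
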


\begin{proof}
The proof is by constructing a degeneracy ordering, which is an ordering on the vertices that we get from repeatedly removing a vertex of minimum degree in the remaining subgraph.  First, we put the degree-one vertex in the degeneracy ordering and delete them. In the remaining subgraph, we select all the vertices in $S_E$ as well as all the vertices of degree three and put them in the ordering, because every vertex has degree $3$. After removing all vertices of $S_E$ and paths from the graph, each vertex in any block $V_i$, for $1\leq i\leq k$, is of degree three. This happens since after removing $S_E$, such vertices are only connected to $x_1^i, x_2^i$, and $u_i$. So, we can obtain this ordering. Finally, we add all of the remaining vertices.
\end{proof}

\begin{lemma}
If there exists a k-colored independent set in $G$, then there exists an outer connected dominating set of size $2k+2$ in $G'$.
\end{lemma}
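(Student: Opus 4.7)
The plan is to construct an OCD set of size $2k+2$ directly from the multicolored independent set. Let $I=\{v_1,\ldots,v_k\}$ with $v_i\in V_i$ be the given independent set in $G$, and take
$$D = \{v_1,\ldots,v_k\}\cup\{u_1,\ldots,u_k\}\cup\{r,r''\}.$$
This set has cardinality exactly $2k+2$, so what remains is to check both conditions of an OCD set.

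First I would verify domination by a routine case analysis on $V(G')\setminus D$. Every remaining old vertex of $V_i$, every helper $x_1^i$ and $x_2^i$, and every vertex of the $i$-th length-$4$ path is adjacent to $u_i\in D$; every subdivision vertex $v_e\in S_E$ is adjacent to $r\in D$; and $r'$ is adjacent to the pendant $r''\in D$. Hence $D$ is a dominating set of $G'$.

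Next I would show that $G'[V(G')\setminus D]$ is connected by exhibiting $r'$ as a hub. For each color $i$, every vertex of $V_i\setminus\{v_i\}$ is adjacent to $x_1^i$, and there is a path from $x_1^i$ through the first path vertex $p_1^i$ along the $i$-th path to its last vertex, which is adjacent to $r'$; all these vertices lie outside $D$, so the entire $i$-th gadget is connected to $r'$ in the induced subgraph. The only remaining vertices are the subdivision vertices $v_e\in S_E$. For $e=\{u,v\}\in E(G)$, the neighbours of $v_e$ in $V(G')$ are exactly $r$, $u$ and $v$; since $I$ is \emph{independent}, we cannot have both $u,v\in I$, so at least one of $u,v$ lies in some $V_i\setminus\{v_i\}\subseteq V(G')\setminus D$, and $v_e$ is therefore attached to the main component through that endpoint.

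The only step that is not purely mechanical is this last one about $v_e$: it is precisely the independence of $I$ that prevents $v_e$ from losing both of its old-vertex neighbours to $D$ and becoming isolated once $r$ is removed. It is also worth noting that putting $r''$ (rather than $r'$) into $D$ is essential, since otherwise the pendant $r''$ would be orphaned in $V(G')\setminus D$; this small choice is what keeps the total count at $2k+2$ while still preserving outer-connectivity.
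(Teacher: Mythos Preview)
Your set $D$ is exactly the one the paper chooses, and your outer-connectivity argument is in fact more complete than the paper's (which simply says ``it is easy to investigate''); in particular, your observation that the independence of $I$ is precisely what keeps each subdivision vertex $v_e$ attached after $r$ is removed is the key point, and the paper does not spell it out.

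There is, however, a slip in your domination case analysis. In the construction, the three new vertices $x_1^i$, $x_2^i$, $u_i$ are each joined to all \emph{old} vertices of $V_i$, but not to one another; hence $x_1^i$ and $x_2^i$ are \emph{not} adjacent to $u_i$. They are dominated instead by $v_i\in D$, which is the actual reason the selected independent-set vertices must sit in $D$ (and the paper says so explicitly: ``each pair of $x_1^i, x_2^i$ vertices is dominated by a single vertex in $V_i$''). Once you reroute that one sentence, your proof is correct and matches the paper's approach.
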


\begin{proof}
Suppose that $S$ is a k-colored independent set in $G$. We construct the outer connected dominating set $D$ for $G'$ which contains $r$, $r''$, all the vertices in $S$ as well as all of the $k$ vertices $u_i$, for $ 1\leq i\leq k$. Clearly, the size of $D$ is $2k+2$. It remains to argue that $D$ is an outer connected dominating set. Each vertex $v_j \in V_i$ and each vertex in the paths of length 4 which is connected to $u_i$ is dominated by $u_i$. Moreover, each pair of $x_1^i, x_2^i$ vertices is dominated by a single vertex in $V_i$. All the vertices in $S_E$ are dominated by $r$. It is easy to investigate outer connectivity of this dominating set. 
\end{proof}

\begin{lemma}
If there exists an outer connected dominating set of size $2k+2$ in $G'$, then there exists a k-colored independent set in $G$.
\end{lemma}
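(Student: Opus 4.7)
The plan is to extract a multicolored independent set from any outer-connected dominating set $D$ of $G'$ with $|D| \le 2k+2$, by pinning down the structure of $D$ and then exploiting outer-connectivity. Write $X_i := V_i \cup \{x_1^i,x_2^i,u_i,p_1^i,p_2^i,p_3^i,p_4^i\}$ for the color-$i$ gadget. First I would establish the forced inclusions: the vertex $r''$ must lie in $D$, since its only neighbor is $r'$, so if $r''\notin D$ then either $r''$ is isolated in the induced subgraph on $V(G')\setminus D$ or else $r'\in D$ and $r''$ is undominated, either way contradicting the OCD property. A short case analysis showing that no single vertex of $X_i$ dominates all of $X_i$ (e.g.\ a vertex of $V_i$ misses every $p_j^i$; $u_i$ alone misses $x_1^i,x_2^i$; $p_1^i$ alone misses $V_i$ and $p_3^i,p_4^i$) gives $|D\cap X_i|\ge 2$. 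Together with the need to dominate $r$ (forcing $|D\cap(\{r\}\cup S_E)|\ge 1$) and $r''\in D$, the budget $2k+2$ is saturated: $|D\cap X_i|=2$ for every $i$, and exactly one ``extra'' vertex of $D$ lies in $\{r\}\cup S_E$.

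Next I would pin down the shape of $D\cap X_i$. Dominating $\{x_1^i,x_2^i\}$, whose common neighborhood is $V_i\cup\{p_1^i\}$, forces $D\cap(V_i\cup\{p_1^i\})\ne\emptyset$, and dominating the interior path vertices $p_2^i,p_3^i$ under the 2-vertex budget forces $u_i\in D$. Hence $D\cap X_i=\{u_i,w_i\}$ with $w_i\in V_i\cup\{p_1^i\}$. I would then use outer-connectivity to rule out the ``bad'' local configuration $w_i=p_1^i$ as well as the case where the extra vertex is some $v_e\in S_E$ rather than $r$: in each case a careful analysis of the bridge between the path-side of $V\setminus D$ (rooted at $r'$, which can reach the $V$-side only through some $p_1^i\notin D$) and the $V$-side (which can reach the rest only through subdivision vertices whose non-$D$ endpoints lie in $V_j\setminus\{w_j\}$) either yields a contradiction or permits swapping the offending vertex for $r$ (respectively, for a vertex of $V_i$), producing an equivalent OCD of size $2k+2$ with extra $=r$ and every $w_i\in V_i$.

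Finally, with $D=\{r,r''\}\cup\bigcup_i\{u_i,w_i\}$ and each $w_i\in V_i$, the independence of $\{w_1,\dots,w_k\}$ in $G$ is immediate: if some $\{w_i,w_j\}\in E(G)$ then the subdivision vertex $v_{\{w_i,w_j\}}$ has all three of its neighbors $w_i,w_j,r$ in $D$, so it is isolated in the induced subgraph on $V\setminus D$, contradicting outer-connectivity. Thus $\{w_1,\dots,w_k\}$ is the desired multicolored independent set in $G$. The main obstacle is the normalization step of the previous paragraph: the configuration $\{u_i,p_1^i\}$ and the case extra $=v_e$ are not excluded by local domination alone, and ruling them out (or rerouting them to the standard form) requires a delicate outer-connectivity argument that spans the $k$ color gadgets, the $r'$-rooted path-side, and the subdivision vertices together.
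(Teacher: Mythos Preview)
Your approach is essentially the paper's: count the $2k+2$ budget against the forced inclusions ($r''$, two per color gadget, one for $\{r\}\cup S_E$) to pin down $D$, and then use outer-connectivity to extract independence. Your version is in fact more careful than the paper's, which simply asserts that all $u_i$ and $r$ lie in $D$ and that the chosen $w_i$ lie in $V_i$, without addressing the configurations $w_i=p_1^i$ or ``extra $=v_e$'' that you correctly flag as needing a separate outer-connectivity argument; your final independence step (the subdivision vertex $v_{\{w_i,w_j\}}$ becoming isolated) is also sharper than the paper's phrasing about ``neighbors in $S_E$''.
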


\begin{proof}
Let $D$ be an outer connected dominating set of size at most $2k+2$. First, note that the set $D$ must contain the vertices $r''$ as well as all $u_i$ vertices. Otherwise, $D$ must contain at least two vertices of each path. So, $k+1$ vertices are needed and only $k+1$ vertices remain to dominate the remaining vertices. To dominate the vertices in $S_E$, we select the vertex $r$. To dominate all the vertices $x_i^1$ and $x_i^2$ for each $1\leq i \leq k$, either all of these vertices for each $i$ must be chosen or one vertex in each group $V_i$ needs to be chosen. Again, the first case is impossible, so, we must select one vertex from each group. The $k$ selected vertices in each group must not have any neighbors in $S_E$, otherwise, a vertex in $S_E$ is not connected to the vertices in $V\setminus D$. Therefore, all of these $k$ vertices are independent in the original graph.
\end{proof}

\begin{theorem}\label{d-degeneracy}
The \mocd\ problem, parameterized by solution size, is W[1]-hard on graphs of degeneracy $3$. 
\end{theorem}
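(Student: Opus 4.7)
The plan is to assemble the construction from the reduction subparagraph above into a proper parameterized reduction to \mocd\ on graphs of degeneracy at most $3$, and then invoke the established W[1]-hardness of the source problem. Since {\sc Multicolored Independent Set} parameterized by the number $k$ of colors is W[1]-hard \cite{fellows2009parameterized}, it suffices to exhibit a parameterized reduction that sends an instance $(G,k)$ to an instance $(G',k')$ of \mocd\ such that $k'$ depends only on $k$, the construction runs in polynomial time, and the two instances are equivalent.

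First, I would set $(G',k'):=(G',2k+2)$, with $G'$ the output of the construction in the reduction subparagraph. The parameter map $k\mapsto 2k+2$ depends only on $k$, which is the first requirement for a parameterized reduction. Then I would verify that $G'$ is produced in polynomial time: the construction adds one subdivision vertex per edge of $G$, a constant number of gadget vertices per color class, and the global vertices $r, r', r''$, so $|V(G')|+|E(G')|=O(|V(G)|+|E(G)|)$ and the whole construction is polynomial in the size of $(G,k)$.

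Next, I would appeal to the preceding lemmas to establish equivalence of the two instances. The second lemma gives the forward direction: a multicolored independent set in $G$ yields an OCD-set of size exactly $2k+2$ in $G'$. The third lemma gives the backward direction: any OCD-set of size at most $2k+2$ in $G'$ must contain $r''$ and every $u_i$, and together with the need to dominate $S_E$ while keeping $V(G')\setminus D$ connected, forces the selection of exactly one vertex from each color class $V_i$ whose representatives are pairwise non-adjacent in $G$. Hence $(G,k)$ admits a multicolored independent set if and only if $(G',2k+2)$ admits an OCD-set of size at most $2k+2$.

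Finally, the first lemma guarantees that $G'$ has degeneracy at most $3$, so the reduction stays entirely within the claimed graph class. Combining these ingredients yields a parameterized reduction from a W[1]-hard problem to \mocd\ restricted to $3$-degenerate graphs, and the theorem follows. The only subtle point I anticipate is the parameter-preservation bookkeeping, namely confirming that no hidden factor depending on $|V(G)|$ sneaks into the new parameter; by design, however, the solution-size target is exactly $2k+2$, so this is immediate.
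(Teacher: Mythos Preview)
Your proposal is correct and follows exactly the paper's approach: the theorem is stated as an immediate corollary of the three preceding lemmas together with the known W[1]-hardness of {\sc Multicolored Independent Set}, and you have simply made explicit the routine verification (polynomial-time construction, parameter map $k\mapsto 2k+2$, equivalence via the second and third lemmas, degeneracy bound via the first lemma) that the paper leaves implicit.
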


\email{m.alambardar@stu.yazd.ac.ir }
\email{hooshmandasl@yazd.ac.ir }
\email{ali.shakiba@vru.ac.ir}
\end{document}